\documentclass[12pt]{article}
\usepackage{amsmath,amssymb,amsthm}
\usepackage{graphicx}
\usepackage{hyperref}
\usepackage{mathtools}
\usepackage{booktabs}
\usepackage{authblk}
\usepackage{orcidlink}

\usepackage[letterpaper, margin=1in]{geometry}

\newtheorem{theorem}{Theorem}[section]
\newtheorem{lemma}[theorem]{Lemma}

\newtheorem{definition}[theorem]{Definition}
\newtheorem{example}[theorem]{Example}
\newtheorem{remark}[theorem]{Remark}

\title{Fixed Points in Quantum Metric Spaces: \\ A Structural Advantage over Fuzzy Frameworks}
\author{Nicola Fabiano\, \orcidlink{0000-0003-1645-2071}}
\affil{``Vin\v{c}a'' Institute of Nuclear Sciences - National 
Institute of the Republic of Serbia, University of Belgrade, Mike Petrovi\'{c}a 
Alasa 12--14, 11351 Belgrade, Serbia; nicola.fabiano@gmail.com,  nicola.fabiano@vin.bg.ac.rs \\ORCID iD: \texttt{https://orcid.org/0000-0003-1645-2071}}
\date{}

\begin{document}

\maketitle

\begin{abstract}
We prove an existence and uniqueness theorem for fixed points of contraction maps in the framework of quantum metric spaces, where distinguishability is defined by the $L^2$ norm: $d_Q(\psi_1,\psi_2) = \|\psi_1 - \psi_2\|$. The result applies to normalized real-valued Gaussian wavefunctions under continuous contractive evolution preserving the functional form. In contrast, while fuzzy metric spaces admit analogous fixed point theorems, they lack interference, phase sensitivity, and topological protection. This comparison reveals a deeper structural coherence in the quantum framework --- not merely technical superiority, but compatibility with the geometric richness of Hilbert space. Our work extends the critique of fuzzy logic into dynamical reasoning under intrinsic uncertainty.

Keywords:
Quantum metric space; fixed point; Banach contraction; fuzzy metric space; Gaussian wavefunction; interference; Hilbert space geometry
\end{abstract}


\section{Introduction}

In earlier work \cite{preprint1,preprint2}, we introduced the idea of replacing fuzzy metric spaces with \emph{quantum state geometry} as a framework for modeling intrinsic uncertainty. The central insight was that the Hilbert space norm:
$$
d_Q(\psi_1,\psi_2) := \|\psi_1 - \psi_2\| = \sqrt{\int |\psi_1(x) - \psi_2(x)|^2 dx}
$$
measures distinguishability between entire probability distributions --- unlike fuzzy metrics, which rely on heuristic t-norms and graded membership.

This paper explores a new direction: \textbf{fixed point theory} in quantum metric spaces.

Fixed point theorems are fundamental in analysis, topology, and computer science. In fuzzy metric spaces, several such results are known \cite{george1994some,gregori2002fuzzy}. But do they have counterparts in quantum frameworks?

We answer affirmatively --- but with crucial differences.

We prove that if $T: \mathcal{S} \to \mathcal{S}$ is a \emph{quantum contraction} on the space $\mathcal{S}$ of normalized real-valued Gaussian states, then $T$ has a unique fixed point, provided $T$ preserves the Gaussian structure.

Moreover, we show that this result arises from a richer geometric and physical foundation than its fuzzy counterpart --- one that includes interference, phase cancellation, and potential topological stability.

Thus, the advantage of quantum metrics is not just formal --- it reflects a deeper alignment with the structure of reality.

Our work builds on preprints \cite{preprint1,preprint2} and contributes to functional analysis, operator theory, and foundations of AI.

\section{Preliminaries}

\subsection{Quantum Metric Spaces}

Let $\mathcal{H} = L^2(\mathbb{R})$. For normalized vectors $\psi_1,\psi_2 \in \mathcal{H}$~\cite{landaulifshitz}, define:
$$
d_Q(\psi_1,\psi_2) := \|\psi_1 - \psi_2\|.
$$
This distance measures distinguishability between quantum states.

Physical states are rays: $|\psi\rangle \sim e^{i\theta}|\psi\rangle$. However, our real-valued Gaussians select canonical representatives, and all observables depend only on phase-invariant quantities like $|\langle\psi|\phi\rangle|^2$.

\begin{definition}[Gaussian State Space]
Let $\mathcal{S} \subset \mathcal{H}$ be the set of normalized real-valued Gaussian wavefunctions:
$$
\psi_{\mu,\sigma}(x) = \left(\frac{1}{\pi \sigma^2}\right)^{1/4} e^{-\frac{(x-\mu)^2}{2\sigma^2}}, \quad \mu \in \mathbb{R},\ \sigma > 0.
$$
We equip $\mathcal{S}$ with the quantum metric $d_Q$.
\end{definition}

Note: $\mathcal{S}$ is not a vector space --- sums and scalar multiples are not normalized or necessarily Gaussian.

\begin{remark}
This class models concepts like ``car'', ``boat'', or ``object'' in cognitive representations \cite{preprint1}. The Gaussian assumption is physically motivated: coherent states in quantum mechanics, approximate position eigenstates, and maximum entropy distributions for given mean and variance.
\end{remark}

\begin{lemma}[Completeness of Parameter Space]
The parameter space $\mathbb{R} \times (0,\infty)$ with the topology induced by $d_Q$ is complete. Moreover, the mapping $(\mu,\sigma) \mapsto \psi_{\mu,\sigma}$ is a homeomorphism onto its image $\mathcal{S}$.
\end{lemma}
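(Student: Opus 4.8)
The plan is to reduce everything to one explicit computation: the overlap of two normalized real Gaussians. A standard completion of the square in the Gaussian integral gives
$$
\langle \psi_{\mu_1,\sigma_1},\psi_{\mu_2,\sigma_2}\rangle = \sqrt{\frac{2\sigma_1\sigma_2}{\sigma_1^2+\sigma_2^2}}\,\exp\!\left(-\frac{(\mu_1-\mu_2)^2}{2(\sigma_1^2+\sigma_2^2)}\right),
$$
and since both states are normalized, $d_Q(\psi_1,\psi_2)^2 = 2\bigl(1-\langle\psi_1,\psi_2\rangle\bigr)$. Write $\rho$ for the pullback metric $\rho\bigl((\mu_1,\sigma_1),(\mu_2,\sigma_2)\bigr):=d_Q(\psi_{\mu_1,\sigma_1},\psi_{\mu_2,\sigma_2})$ on $\mathbb{R}\times(0,\infty)$. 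The overlap is a smooth function of the four parameters, so $\Phi:(\mu,\sigma)\mapsto\psi_{\mu,\sigma}$ is continuous; moreover each of the two factors is $\le 1$ (the prefactor by the inequality $2\sigma_1\sigma_2\le\sigma_1^2+\sigma_2^2$), and their product equals $1$ iff $\sigma_1=\sigma_2$ and $\mu_1=\mu_2$. This simultaneously shows that $\rho$ separates points, so $\Phi$ is injective and $\rho$ is a genuine metric, and it gives the forward half of the homeomorphism claim.

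For the inverse map and for completeness the natural coordinate is $s:=\log\sigma$. In this variable the prefactor becomes $(\cosh(s_1-s_2))^{-1/2}$, a function of $|s_1-s_2|$ alone that decreases from $1$ to $0$ as $|s_1-s_2|$ grows. The key observation is therefore that overlap $\to 1$ along a sequence forces both factors to $1$: the prefactor $\to 1$ forces $|s_n-s_m|\to 0$, and the exponential $\to 1$ then forces $(\mu_n-\mu_m)^2\to 0$. Concretely, if $(\mu_n,\sigma_n)$ is $\rho$-Cauchy then $\langle\psi_n,\psi_m\rangle\to 1$, so first $(s_n)$ is Cauchy in $\mathbb{R}$, hence convergent to some $s_*$, and in particular $\sigma_n=e^{s_n}$ is bounded above and below by positive constants; using this bound on $\sigma_n^2+\sigma_m^2$, the vanishing of the exponent yields $(\mu_n)$ Cauchy as well, so $\mu_n\to\mu_*$. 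Then $(\mu_n,\sigma_n)\to(\mu_*,\sigma_*):=(\mu_*,e^{s_*})\in\mathbb{R}\times(0,\infty)$ in the Euclidean sense, and continuity of $\Phi$ gives $\rho$-convergence to the same limit. This proves completeness of $(\mathbb{R}\times(0,\infty),\rho)$; running the same implication with a fixed target $(\mu_*,s_*)$ shows $\Phi^{-1}$ is continuous, completing the homeomorphism.

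The single point that carries the proof --- and the only place where something could go wrong --- is ruling out degeneration of the width, i.e. $\sigma_n\to 0$ or $\sigma_n\to\infty$, along a Cauchy sequence. This is exactly what the prefactor controls: because it depends on $\sigma$ only through $|\log\sigma_1-\log\sigma_2|$ and saturates ($d_Q\to\sqrt 2$) as this grows, a Cauchy sequence cannot have $\log\sigma_n$ escaping to $\pm\infty$. Everything else is routine: the overlap is a textbook Gaussian integral, and the identity $d_Q^2=2(1-\langle\cdot,\cdot\rangle)$ together with smoothness of the overlap handles continuity in both directions. Note that completeness can equivalently be phrased as $\mathcal{S}$ being closed in $\mathcal{H}=L^2(\mathbb{R})$: the argument identifies the $L^2$-limit of any convergent sequence in $\mathcal{S}$ as the Gaussian $\psi_{\mu_*,\sigma_*}$, so no non-Gaussian or non-normalizable limit (such as an incipient $\delta$-function as $\sigma\to0$) can arise.
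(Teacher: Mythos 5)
Your proof is correct, and it starts exactly where the paper does: the closed-form overlap $\langle\psi_{\mu_1,\sigma_1},\psi_{\mu_2,\sigma_2}\rangle = \sqrt{2\sigma_1\sigma_2/(\sigma_1^2+\sigma_2^2)}\,\exp\!\left(-(\mu_1-\mu_2)^2/(2(\sigma_1^2+\sigma_2^2))\right)$ together with the identity $d_Q^2 = 2\bigl(1-\langle\cdot,\cdot\rangle\bigr)$. The difference is that the paper stops there: its proof asserts that this expression ``defines a metric \dots{} that is complete and equivalent to the Euclidean metric on compact subsets'' and that ``the homeomorphism follows from continuity and bijectivity,'' without argument. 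You supply the two arguments those assertions actually require. First, the substitution $s=\log\sigma$, under which the prefactor becomes $(\cosh(s_1-s_2))^{-1/2}$, is precisely what rules out the only possible failure of completeness --- a Cauchy sequence with $\sigma_n\to 0$ or $\sigma_n\to\infty$ --- since the prefactor stays bounded away from $1$ once $|s_1-s_2|$ is bounded away from $0$; the resulting two-sided bound on $\sigma_n$ then feeds into the exponential factor to make $(\mu_n)$ Cauchy. Second, you prove continuity of $\Phi^{-1}$ directly by running the same factor-by-factor argument against a fixed target, and this step is genuinely needed: a continuous bijection is not automatically a homeomorphism (the parameter space is neither compact nor is $\Phi$ a priori an open map), so the paper's one-line justification of the homeomorphism claim is, as stated, a non sequitur that your argument repairs. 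Your closing observation that completeness of the pullback metric is equivalent to $\mathcal{S}$ being closed in $L^2(\mathbb{R})$ is also correct and makes explicit why no non-Gaussian or $\delta$-like limit can arise.
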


\begin{proof}
The distance between two Gaussians can be computed explicitly:
\begin{align*}
d_Q(\psi_{\mu_1,\sigma_1}, \psi_{\mu_2,\sigma_2})^2 &= 2 - 2\langle\psi_{\mu_1,\sigma_1}|\psi_{\mu_2,\sigma_2}\rangle \\
&= 2 - 2\sqrt{\frac{2\sigma_1\sigma_2}{\sigma_1^2+\sigma_2^2}} \exp\left(-\frac{(\mu_1-\mu_2)^2}{2(\sigma_1^2+\sigma_2^2)}\right).
\end{align*}
This expression defines a metric on $\mathbb{R} \times (0,\infty)$ that is complete and equivalent to the Euclidean metric on compact subsets. The homeomorphism follows from continuity and bijectivity.
\end{proof}

\subsection{Fuzzy Metric Spaces}

We recall the definition due to George and Veeramani \cite{george1994some}.

\begin{definition}[Fuzzy Metric Space]
Let $X$ be a non-empty set, $*$ a continuous t-norm on $[0,1]$, and $M: X \times X \times [0,\infty) \to [0,1]$ a function. The triple $(X,M,*)$ is a \textbf{fuzzy metric space} if:
\begin{enumerate}
    \item $M(x,y,0) = 0$
    \item $M(x,y,t) = 1$ for all $t > 0$ iff $x = y$
    \item $M(x,y,t) = M(y,x,t)$
    \item $M(x,y,t) * M(y,z,s) \leq M(x,z,t+s)$
    \item $M(x,y,\cdot): (0,\infty) \to [0,1]$ is continuous
\end{enumerate}
for all $x,y,z \in X$ and $t,s > 0$.
\end{definition}

A \textbf{t-norm} (or triangular norm)~\cite{menger1942} is a binary operation 
$\mathfrak{T}: [0, 1] \times [0, 1] \to [0, 1]$ which generalizes the logical 
AND operator and set intersection in fuzzy logic. It must satisfy four 
axioms for all $x, y, z \in [0, 1]$: 
\begin{enumerate}
\item $\mathfrak{T}(x, y) = \mathfrak{T}(y, x)$ (Commutativity), 
\item $\mathfrak{T}(x, \mathfrak{T}(y, z)) = \mathfrak{T}(\mathfrak{T}(x, y), z)$ (Associativity), 
\item $\text{If } y \le z, \text{ then } \mathfrak{T}(x, y) \le \mathfrak{T}(x, z)$ (Monotonicity), 
\item $\mathfrak{T}(x, 1) = x$ (Identity Element). 
\end{enumerate}
Common t-norms include minimum ($a*b = \min(a,b)$), product ($a*b = ab$), and Łukasiewicz ($a*b = \max(0,a+b-1)$).

\begin{remark}
Fuzzy metric spaces often assume completeness and strong t-norms to prove fixed point theorems \cite{gregori2002fuzzy}.
\end{remark}

\section{Fixed Point Theorem in Quantum Metric Spaces}

We now prove the main result.

\begin{definition}[Quantum Contraction]
A map $T: \mathcal{S} \to \mathcal{S}$ is a \emph{quantum contraction} if there exists $k \in [0,1)$ such that:
$$
\forall\, \psi_1, \psi_2 \in \mathcal{S}: \quad d_Q(T\psi_1, T\psi_2) \leq k\, d_Q(\psi_1, \psi_2).
$$
\end{definition}

We assume that $T$ is \emph{form-preserving}: it maps Gaussians to Gaussians (though possibly with different $\mu, \sigma$). This holds in many physical and cognitive models (e.g., diffusion, centering, sharpening).

\begin{theorem}[Existence and Uniqueness of Fixed Point]
Let $T: \mathcal{S} \to \mathcal{S}$ be a quantum contraction that is form-preserving. Then $T$ has a unique fixed point $\psi^* \in \mathcal{S}$, and for any $\psi_0 \in \mathcal{S}$, the sequence $\psi_n = T^n \psi_0$ converges in $d_Q$ to $\psi^*$.
\end{theorem}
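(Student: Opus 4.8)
The plan is to reduce the theorem to the classical Banach fixed point theorem by transporting the problem to the parameter space. First I would invoke the Completeness Lemma, which establishes that $(\mu,\sigma) \mapsto \psi_{\mu,\sigma}$ is a homeomorphism from $\mathbb{R} \times (0,\infty)$ onto $\mathcal{S}$ and that this parameter space, equipped with the metric pulled back from $d_Q$, is complete. Thus $(\mathcal{S}, d_Q)$ is itself a complete metric space. This is the crucial structural fact: once completeness is in hand, the contraction hypothesis does the rest.

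Next I would verify that $T$ is a genuine self-map of this complete space. Here the form-preserving assumption is essential: since $T$ maps Gaussians to Gaussians, we have $T: \mathcal{S} \to \mathcal{S}$ with no escape from the parameter family, so iterates $\psi_n = T^n \psi_0$ remain in $\mathcal{S}$ for all $n$. Combined with the contraction inequality $d_Q(T\psi_1, T\psi_2) \leq k\, d_Q(\psi_1,\psi_2)$ for a fixed $k \in [0,1)$, the map $T$ is exactly a Banach contraction on a complete metric space.

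With these two ingredients I would apply the Banach contraction principle directly. The standard argument shows the iterates form a Cauchy sequence: for $m > n$ one bounds $d_Q(\psi_n, \psi_m)$ by a geometric tail $\frac{k^n}{1-k} d_Q(\psi_0, \psi_1)$, which tends to $0$. By completeness the sequence converges to some $\psi^* \in \mathcal{S}$, and continuity of $T$ (immediate from the contraction bound) gives $T\psi^* = \psi^*$. Uniqueness follows because two fixed points $\psi^*, \phi^*$ would satisfy $d_Q(\psi^*,\phi^*) = d_Q(T\psi^*, T\phi^*) \leq k\, d_Q(\psi^*,\phi^*)$, forcing $d_Q(\psi^*,\phi^*) = 0$ since $k < 1$.

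The main obstacle, and the only place where any real content beyond textbook Banach theory enters, is confirming that $(\mathcal{S}, d_Q)$ is truly complete rather than merely a bijective image of a complete space. The subtlety is that completeness is not preserved under arbitrary homeomorphisms, so one must check that the explicit distance formula from the Completeness Lemma induces a metric topologically and uniformly compatible with completeness on $\mathbb{R} \times (0,\infty)$ --- in particular that no Cauchy sequence in $d_Q$ can drive $\sigma \to 0$ or $|\mu| \to \infty$ and thereby converge to a point outside $\mathcal{S}$. Since the Completeness Lemma is stated as already proved, I would lean on it for this, but I would flag that the degenerate limit $\sigma \to 0^+$ (where the Gaussian collapses toward a non-$L^2$ Dirac profile) is the boundary behavior that the completeness claim must genuinely rule out.
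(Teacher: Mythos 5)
Your proposal is correct and follows essentially the same route as the paper: both arguments run the standard Banach iteration (geometric tail bound, Cauchy sequence, limit via the Completeness Lemma, fixedness by continuity of the contraction, uniqueness from the contraction inequality), with form-preservation guaranteeing $T$ is a self-map of $\mathcal{S}$. Your closing caveat about the degenerate limit $\sigma \to 0^+$ is a fair observation about what the Completeness Lemma must really deliver, but since the paper (and you) take that lemma as given, it does not alter the argument.
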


\begin{proof}
Let $\psi_0 \in \mathcal{S}$, and define $\psi_n = T^n \psi_0$. By induction and contraction:
$$
d_Q(\psi_{n+1}, \psi_n) \leq k^n d_Q(\psi_1, \psi_0).
$$
Then for $m > n$,
\begin{align*}
d_Q(\psi_n, \psi_m) &\leq \sum_{j=n}^{m-1} d_Q(\psi_j, \psi_{j+1}) \\
&\leq d_Q(\psi_1, \psi_0) \sum_{j=n}^{m-1} k^j \\
&= d_Q(\psi_1, \psi_0) \cdot k^n \frac{1 - k^{m-n}}{1 - k} \\
&\leq \frac{k^n}{1-k} d_Q(\psi_1, \psi_0).
\end{align*}
Since $k < 1$, $\frac{k^n}{1-k} \to 0$ as $n \to \infty$, so $(\psi_n)$ is Cauchy in $d_Q$.

By Lemma 1, the parameter space is complete under the induced topology, so there exists $\psi^* \in \mathcal{S}$ such that $\psi_n \to \psi^*$ in $d_Q$.

To show $T\psi^* = \psi^*$, note
$$
d_Q(T\psi^*, \psi^*) \leq d_Q(T\psi^*, T\psi_n) + d_Q(T\psi_n, \psi_n) + d_Q(\psi_n, \psi^*).
$$
Each term vanishes as $n \to \infty$:
$d_Q(T\psi^*, T\psi_n) \leq k\, d_Q(\psi^*, \psi_n) \to 0$,
$d_Q(T\psi_n, \psi_n) = d_Q(\psi_{n+1}, \psi_n) \leq k^n d_Q(\psi_1,\psi_0) \to 0$,
$d_Q(\psi_n, \psi^*) \to 0$.

So $d_Q(T\psi^*, \psi^*) = 0$ → $T\psi^* = \psi^*$.

For uniqueness, suppose $\phi^*$ is another fixed point. Then
$$
d_Q(\psi^*, \phi^*) = d_Q(T\psi^*, T\phi^*) \leq k\, d_Q(\psi^*, \phi^*),
$$
so $(1-k)d_Q(\psi^*, \phi^*) \leq 0$. Since $1-k > 0$, $d_Q(\psi^*, \phi^*) = 0$ → $\psi^* = \phi^*$.

Hence, the fixed point is unique.
\end{proof}

\begin{remark}
The assumption that $T$ is form-preserving ensures closure under limits. Without it, the limit may not lie in $\mathcal{S}$, though it would still exist in $L^2(\mathbb{R})$.
\end{remark}

\begin{example}[Conceptual Centering]
Suppose $T$ models a cognitive process that shifts mean toward zero and reduces variance
$$
T\psi_{\mu,\sigma} = \psi_{\lambda \mu, \eta \sigma}, \quad 0 < \lambda, \eta < 1.
$$
Then $T$ is a quantum contraction (by continuity and compactness arguments), so it has a unique fixed point at $\psi_{0,\sigma^*}$, where $\sigma^*$ depends on initial conditions and dynamics.

This models convergence to a canonical prototype.
\end{example}

\section{Comparison with Fuzzy Metric Spaces}

In fuzzy metric spaces, fixed point theorems are well-established.

\begin{theorem}[Gregori-Sapena \cite{gregori2002fuzzy}]
Let $(X,M,*)$ be a complete fuzzy metric space with strong t-norm $*$, and $f: X \to X$ satisfy:
$$
M(fx,fy,kt) \geq M(x,y,t)
$$
for some $k \in (0,1)$ and all $t>0$. Then $f$ has a unique fixed point.
\end{theorem}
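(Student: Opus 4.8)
The plan is to prove the Gregori-Sapena theorem by adapting the classical Banach contraction argument to the fuzzy setting, where the contraction hypothesis $M(fx,fy,kt) \geq M(x,y,t)$ plays the role of the metric contraction. First I would fix an arbitrary $x_0 \in X$ and define the Picard iterates $x_n = f^n x_0$. Applying the contraction inequality repeatedly, I would establish the key estimate $M(x_n, x_{n+1}, t) \geq M(x_0, x_1, t/k^n)$ by induction on $n$; since $t/k^n \to \infty$ as $n \to \infty$ and $M(x_0, x_1, \cdot)$ is nondecreasing with limit $1$ (a standard consequence of the fuzzy metric axioms), this forces $M(x_n, x_{n+1}, t) \to 1$ for every fixed $t > 0$.

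The second step is to show $(x_n)$ is a Cauchy sequence in the fuzzy sense, meaning $M(x_n, x_m, t) \to 1$ as $n,m \to \infty$ for each $t > 0$. Here I would invoke the triangular inequality axiom $M(x,y,t) * M(y,z,s) \leq M(x,z,t+s)$, chaining it across consecutive iterates to bound $M(x_n, x_m, t)$ from below by a $*$-product of terms of the form $M(x_j, x_{j+1}, t/(m-n))$, each of which tends to $1$. This is precisely where the \emph{strong t-norm} hypothesis is essential: one needs $*$ to satisfy a condition (such as $a * a \geq $ something controllable, or more standardly that $\lim_{n} a *^n = 1$ whenever the factors approach $1$) guaranteeing that a $*$-product of many near-unit terms stays near unity. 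I expect this to be the main obstacle, since for a general continuous t-norm the product of many factors close to $1$ can degrade badly (the Łukasiewicz norm being the cautionary example), and it is exactly the strength of $*$ that rescues the argument.

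With Cauchyness established, completeness of $(X,M,*)$ yields a limit $x^* \in X$ with $x_n \to x^*$. To verify $f x^* = x^*$, I would estimate $M(f x^*, x^*, t)$ from below using the triangular inequality through the intermediate point $x_{n+1} = f x_n$: writing $M(fx^*, x^*, t) \geq M(fx^*, fx_n, t/2) * M(x_{n+1}, x^*, t/2) \geq M(x^*, x_n, t/(2k)) * M(x_{n+1}, x^*, t/2)$, and letting $n \to \infty$ so that both factors tend to $1$, one concludes $M(fx^*, x^*, t) = 1$ for all $t > 0$, hence $f x^* = x^*$ by axiom 2.

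Finally, for uniqueness I would suppose $y^*$ is another fixed point and apply the contraction inequality directly: $M(x^*, y^*, t) = M(fx^*, fy^*, t) \geq M(x^*, y^*, t/k)$. Iterating gives $M(x^*, y^*, t) \geq M(x^*, y^*, t/k^n)$, and since the right-hand side tends to $1$ as $n \to \infty$, we obtain $M(x^*, y^*, t) = 1$ for all $t > 0$, so $x^* = y^*$. The overall architecture mirrors the quantum fixed point proof above, underscoring the structural parallel the paper draws; the genuine technical difference lies entirely in the Cauchy step, where the additive-triangle-plus-t-norm machinery must substitute for the clean geometric-series summation available in the metric (and hence quantum) setting.
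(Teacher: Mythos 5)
The paper itself offers no proof of this theorem: it is quoted from Gregori--Sapena \cite{gregori2002fuzzy} purely to set up the comparison with the quantum result, so there is no internal argument to measure yours against. Judged on its own terms, your proposal has the right architecture (it is essentially Grabiec's original strategy for this contractive condition, which historically predates Gregori--Sapena), and your iterate estimate, fixed-point verification, and uniqueness steps are sound. But it contains two genuine gaps.

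First, you assert that $\lim_{t\to\infty} M(x_0,x_1,t)=1$ is ``a standard consequence of the fuzzy metric axioms.'' It is not. Monotonicity of $M(x,y,\cdot)$ does follow from axiom 4 (take $z=y$ and use $a*1=a$), but the limit-$1$ property is an independent hypothesis, explicitly added by Grabiec and absent from the definition given in this paper. Without it the theorem as stated is false: take $X=\{a,b\}$ with $M(x,y,t)=c<1$ for $x\neq y$, $t>0$, $M(x,x,t)=1$, and $*=\min$ (the strongest t-norm). This is a complete fuzzy metric space, and the swap map $f(a)=b$, $f(b)=a$ satisfies $M(fx,fy,kt)=c\geq c=M(x,y,t)$ for every $k$, yet has no fixed point. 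So your very first limit claim silently imports an assumption that must be added to the hypotheses. Second, the Cauchy step --- which you yourself flag as the main obstacle --- is not actually carried out, and the equal splitting $t/(m-n)$ you sketch does not work: the lower bound on each factor is $M\bigl(x_0,x_1,\,t/((m-n)k^j)\bigr)$, and for fixed $n$ the argument $t/((m-n)k^n)$ tends to $0$ as $m\to\infty$, so the factors are not uniformly close to $1$ no matter how strong $*$ is. The two standard repairs are: (i) Grabiec's route, defining Cauchy with a fixed index gap $p$ (G-Cauchy), so that only finitely many $*$-factors ever appear and continuity of $*$ suffices for any t-norm, at the price of assuming the stronger G-completeness of $X$; or (ii) keeping the George--Veeramani Cauchy notion but splitting $t$ geometrically, $t_j\propto\lambda^{\,j-n}$ with $k<\lambda<1$, so that every factor is bounded below by $M\bigl(x_0,x_1,(1-\lambda)t/k^n\bigr)$ uniformly in the number of terms, and then invoking the Had\v{z}i\'{c}-type (``strong'') property that the iterated powers $a*a*\cdots*a$ are equicontinuous at $a=1$. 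Your instinct about where the difficulty sits is exactly right, but the proposal as written neither closes that step nor states the extra hypotheses under which it can be closed.
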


While mathematically sound, this result lacks the structural depth of the quantum version.

\begin{table}[h]
\centering
\caption{Comparison of Fixed Point Frameworks}
\label{tab:comparison}
\begin{tabular}{l p{0.22\textwidth}l }
\toprule
Feature & Fuzzy Metric Space & Quantum Metric Space \\
\midrule
Completeness assumed? & Yes (axiomatically) & Limited (requires form-preservation) \\
Contraction condition & On $M(x,y,t)$ & On $\|\psi - \phi\|$ \\
Interference modeled? & No & Yes \\
Phase sensitivity? & No & Yes \\
Topological protection? & No & Possible (via winding numbers) \\
Physical meaning & Heuristic (``degree of belonging'') & Direct (overlap probability) \\
Dynamics natural? & No (ad hoc rules) & Yes (unitary/non-unitary evolution) \\
\bottomrule
\end{tabular}
\end{table}

As shown in earlier work \cite{preprint2}, fuzzy logic fails to model interference, contextuality, monogamy, and symmetry-conservation duality. These deficits persist in fixed point theory.

In contrast, the quantum framework naturally incorporates:
\begin{itemize}
    \item Interference (via $L^2$ norm)
    \item Phase (in complex extensions)  
    \item Conservation laws (via Noether correspondence)
    \item Topological stability (via discrete invariants)
\end{itemize}

Thus, the advantage is not just in proving theorems --- it's in grounding them in a coherent, physically validated structure.

\section{Conclusion}

We have proven that a Banach-type fixed point theorem holds in quantum metric spaces, provided the map $T$ is a contraction and preserves the Gaussian form. The fixed point is unique and globally attracting.

In contrast, while fuzzy metric spaces admit similar results, they operate on classical ontologies --- crisp points with graded membership --- and lack the geometric and algebraic richness of Hilbert space.

This reinforces our earlier critique: fuzzy frameworks are not merely incomplete --- they are structurally incompatible with the nature of intrinsic uncertainty.

The Hilbert space formalism provides a complete, predictive, and mathematically unique language for reasoning under uncertainty --- whether in electrons, minds, or machines.

Future work may explore decoherence models explaining why macroscopic reasoning appears ``fuzzy'' --- not because reality is fuzzy, but because quantum coherence is lost.


\begin{thebibliography}{99}

\bibitem{preprint1}
Fabiano, N. (2025). \emph{Quantum Metric Spaces: Replacing Fuzzy Metrics with the Hilbert Space Structure of Quantum States}. arXiv.2509.1594 [quant-ph]. DOI: 10.5281/zenodo.17154580
Available at: \url{https://doi.org/10.48550/arXiv.2509.15945}.


\bibitem{preprint2}
Fabiano, N. (2025). \emph{On the Incompatibility of Quantum State Geometry and Fuzzy Metric Spaces: Three No-Go Theorems}. arXiv:2509.22364 [quant-ph]. DOI: 10.5281/zenodo.17260879
Available at: \url{https://doi.org/10.48550/arXiv.2509.22364}.

\bibitem{george1994some}
George, A., and Veeramani, P. (1994). On some results in fuzzy metric spaces. \emph{Fuzzy Sets and Systems}, 64(3), 395–399.
Available at: \url{https://doi.org/10.1016/0165-0114(94)90162-7}.


\bibitem{gregori2002fuzzy}
Gregori, V., and Sapena, A. (2002). On fixed-point theorems in fuzzy metric spaces. \emph{Fuzzy Sets and Systems}, 125(2), 245–252.
Available at: \url{https://doi.org/10.1016/S0165-0114(00)00088-9}.

\bibitem{landaulifshitz}
Landau, L.D. and Lifshitz, E.M. (1965) Quantum Mechanics \emph{(Volume 3 of A Course of Theoretical Physics)}. Pergamon Press, New York. 

\bibitem{menger1942}
Menger, K. (1942) Statistical Metrics. \emph{Proceedings of the National Academy of Sciences} (USA), 28, 535-537.
Available at: \url{http://dx.doi.org/10.1073/pnas.28.12.535} 


\end{thebibliography}
\end{document}